\newcommand{\cyl}[1]{{Cyl(#1)}}
\newcommand{\ZZ}{\mathbb{Z}}
\newcommand{\zzd}{{\mathbb{Z}^d}}
\newcommand{\alf}{\mathbb{S}}
\newcommand{\gr}{\mathbb{G}}
\newcommand{\grh}{\mathbb{H}}
\newcommand{\grs}{\mathbb{S}}
\newcommand{\grx}{\mathbb{X}}
\newcommand{\Ker}{\mathrm{Ker}}
\newcommand{\Imma}{\mathrm{Im}}
\newcommand{\locrule}{\ensuremath{f}}
\newcommand{\glorule}{\ensuremath{F}}
\newcommand{\ie}{i.e.\@\xspace}
\DeclareMathOperator{\lcm}{lcm}
\newtheorem{theorem}{Theorem}
\newtheorem{definition}{Definition}
\newtheorem{remark}{Remark}
\newtheorem{lemma}{Lemma}
\title{Topological transitivity of group cellular automata is decidable }
\author[1]{Niccol\`o Castronuovo}
\author[2]{Alberto Dennunzio}
\author[3]{Luciano Margara}
\affil[1]{Liceo ``A. Einstein,''  Rimini, Italy}
\affil[2]{Department of Informatics, Systems and Communication, University of Milano-Bicocca, Italy}
\affil[3]{Department of Computer Science and Engineering, University of Bologna, Campus of Cesena, Italy}
\date{}
\begin{document}
\maketitle

\begin{abstract}
Topological transitivity is a fundamental notion in topological dynamics and is widely regarded as a basic indicator of global dynamical complexity. 
For general cellular automata,  topological transitivity is known to be undecidable. 
By contrast, positive decidability results have been established for one-dimensional group cellular automata over abelian groups, while the extension to higher dimensions and to non-abelian groups has remained an open problem.
In this work, we settle this problem by proving that topological transitivity is decidable for the class of $d$-dimensional ($d\geq 1$) group cellular automata over arbitrary finite groups. 
Our approach combines a decomposition technique for group cellular automata, reducing the problem to the analysis of simpler components, with an extension of several results from the existing literature in the one-dimensional setting.
As a consequence of our results, and exploiting known equivalences among dynamical properties for  group cellular automata, we also obtain the decidability of several related notions, including total transitivity, topological mixing and weak mixing, weak and strong ergodic mixing, and ergodicity.

\end{abstract}


\noindent \textbf{Keywords:}  
Dynamical Systems, Group Cellular Automata,  Topological Transitivity, Decidability.


\section{Introduction} \label{introduction}

A discrete-time dynamical system is a pair $( \grx,\glorule )$, where $\grx$ is a  topological space (the phase space) and $\glorule \colon  \grx \to  \grx$ is a continuous map. 
The system $( \grx,\glorule )$ is topologically transitive if for every pair of nonempty open sets $U,V \subseteq  \grx$ there exists an integer $n \ge 0$ such that $\glorule^n(U) \cap V \neq \varnothing$.
Topological transitivity is a fundamental concept in topological dynamics, reflecting the ability of the system to move between arbitrary regions of the phase space $ \grx$ under iterations of $\glorule$. As such, it has been extensively studied and is widely regarded as a basic indicator of global dynamical complexity, often serving as a stepping stone toward stronger properties such as mixing (see, for example, \cite{Auslander1988,Devaney1989,DurandPerrin2022,Glasner2003} and the references therein).

\medskip\noindent
Cellular automata (CAs) are discrete-time dynamical systems whose phase space is $\alf^\zzd$, where $\alf$ is a finite set of states, and whose global evolution map $\glorule:\alf^\zzd\to\alf^\zzd$ is a continuous function commuting with all the shift maps of $\alf^\zzd$. Here, $d$ is a positive integer representing the dimension of the phase space on which the CA is defined.
Group cellular automata (GCAs) are CAs whose phase space is $\gr^\zzd$ for a finite group $\gr$ and the map $\glorule \colon \gr^\zzd \to \gr^\zzd$ is a continuous group endomorphism commuting with all shifts of the space. 

\medskip\noindent
The global map $\glorule$ of any CA is determined by a local rule $f$ (Curtis-Hedlund-Lyndon-theorem \cite{hedlund69}). The local rule is a finite object and admits several equivalent representations, for instance in tabular form. This finitary description leads to the following natural question: given the description of the local rule $f$ defining $\glorule$, is it possible to decide algorithmically whether the system $(\alf^\zzd,\glorule )$ satisfies a prescribed dynamical property, such as topological transitivity?


\medskip\noindent
For general CAs, even  when restricting to reversible one-dimensional CAs, topological transitivity has been shown to be undecidable \cite{Lukkarila10}.  By contrast, for one-dimensional GCAs over abelian groups, topological transitivity is decidable \cite{DennunzioFGM21INS} and, moreover, admits an effective characterization \cite{DBLP:journals/isci/DennunzioFM24}. 


\medskip\noindent
In this paper we address the 
problem of determining whether topological transitivity is algorithmically decidable for GCAs beyond the one-dimensional abelian setting (see Question 2 in \cite{BeaurK24} and Question 1 in \cite{CASTRONUOVO2026103749}).
Our main contribution is a complete positive answer: we prove that topological transitivity is decidable for  GCAs in any dimension and over arbitrary finite groups (Theorem~\ref{supermain}).
The proof combines two main ingredients.

First, we develop a transitivity-preserving decomposition technique obtained by iteratively taking quotients by verbal subgroups.
Given a GCA $(\gr^\zzd,\glorule)$, the procedure ${\tt VerbalDecomposition}$ produces a finite family of GCAs
$\{(\gr_1,\glorule_1),\dots,(\gr_k,\glorule_k)\}$ whose underlying groups admit no non-trivial proper verbal subgroups.
We show that this decomposition preserves topological transitivity: $(\gr^\zzd,\glorule)$ is topologically transitive if and only if each $(\gr_i^{\zzd},\glorule_i)$ is topologically transitive (Theorem~\ref{th_transitivity}).
By a classical structural fact (Remark~\ref{rem:verbaliso}), each $\gr_i$ is a direct product of isomorphic simple groups, hence the original decision problem reduces to this highly constrained class.

Second, we establish decidability of topological transitivity on both branches of the classification.
For the abelian case, where $\gr\simeq(\ZZ/p\ZZ)^n$, we exploit the algebraic representation of GCAs by matrices over multivariate Laurent polynomials and extend to arbitrary dimension an effective transitivity criterion based on the characteristic polynomial (Theorem~\ref{dec-zpn-trans}, building on Theorem~\ref{grinberg}).
For the non-abelian case, where $\gr\simeq \grs^n$ with $\grs$ a finite non-abelian simple group, we extend the one-dimensional analysis of \cite{CASTRONUOVO2026103749} and obtain a decision procedure in the surjective setting (Theorem~\ref{dec-zpn-nonabelian}).

Putting these ingredients together yields an explicit algorithm deciding topological transitivity for any finite-group GCA (Section~5), thereby proving Theorem~\ref{supermain}.
Moreover, since for GCAs topological transitivity is equivalent to total transitivity, topological mixing, weak and strong ergodic mixing, and ergodicity (Section~\ref{sec: Mixing properties}), our results also imply decidability of all these properties.




\medskip\noindent
The rest of this paper is organized as follows.
In Section~\ref{CAandGCA} we recall the basic definitions and preliminary results on CAs and GCAs.
In Section~\ref{sec: Mixing properties} we review several dynamical properties and the relations among them, with a particular emphasis on the simplifications that occur in the GCA setting.
Section~\ref{decomposition_verbal} introduces our transitivity-preserving decomposition technique based on verbal subgroups and establishes the key reduction theorem.
In Sections~\ref{abisom} and~\ref{nonab} we prove decidability of topological transitivity for GCAs defined over direct products of isomorphic simple groups, treating the abelian and non-abelian cases, respectively.
Finally, in Section~\ref{conclusions} we conclude and discuss directions for further research.


\section{CAs and GCAs} \label{CAandGCA}

In this section, we review the fundamental definitions and basic results related to CAs and GCAs. For additional definitions and results, we refer the reader to those introduced in~\cite{GCA24}.

\medskip\noindent
Let $\alf$ be a finite set of states and let $d$ be any positive integer. 
A $d$-dimensional configuration over  $\alf$ is a function from $\zzd$ to $\alf$, \ie, an assignment of symbols
of $\alf$ on the infinite grid $\zzd$.   
$\alf^\zzd$ denotes the set of all the
$d$-dimensional configurations over $\alf$.
Given any configuration $c \in \alf^\zzd$ and any element $v\in \zzd$, the value of $c$ at position $v$ is denoted by $c_v$.
We equip $\alf^\ZZ$ with the prodiscrete topology, that is, the product
topology obtained by endowing each factor $\alf$ with the discrete topology.
With this topology, $\alf^\zzd$ is a compact topological space. 

\medskip\noindent
For an element $u\in \zzd$, the $u$-shift map  $\sigma_u: \alf^\zzd\to \alf^\zzd$ is defined as follows:
 $$\forall c\in \alf^\zzd,\ \forall v\in \zzd:\  \sigma_u(c)_v=c_{v+u}.
 $$
 Notice that $(\alf^\zzd,\sigma_u)$ itself is a CA.
\medskip\noindent
 A $d$-dimensional \emph{CA} on $\alf$ is any continuous function $\glorule: \alf^\zzd \to \alf^\zzd$ which is also shift commuting, \ie, $\glorule\circ \sigma_u=\sigma_u \circ \glorule$ for every $u\in \zzd$.

\medskip\noindent
Any $d$-dimensional CA  
can be equivalently defined \cite{hedlund69} by means of a  local rule  $f:\alf^k \to \alf$ together with a neighbor vector
$v_f=(v_1,\dots,v_k)$ of elements of $\zzd$ as follows:
$$\forall c \in \alf^\zzd,\ \forall v\in \zzd:\ \glorule(c)_v=f(c_{v+v_1},\dots, c_{v+v_k}).$$ 

\medskip\noindent
Let $\gr$ be a finite group with identity element $e$. The set $\gr^\zzd$ is
itself a group under the componentwise operation induced by the group operation
of $\gr$. We denote by $e^\zzd \in \gr^\zzd$  the
identity element of the group $\gr^\zzd$, i.e.,  the configuration taking
the value $e$ at every position $v\in\zzd$. Clearly, when equipped with the prodiscrete topology, $\gr^\zzd$ turns out to be a topological compact group. A configuration $c\in \gr^\zzd$ is said to be \emph{finite} if the number of positions $v\in\zzd$ such that $c_v\neq e$ is finite.

\medskip\noindent
A  CA $\glorule: \gr^\zzd\to \gr^\zzd$ is said to be a   group cellular automata (GCA) if $\glorule$ is an endomorphism of $\gr^{\zzd}$. In that case, the local rule of $\glorule$ is a homomorphism $f:\gr^k \to \gr$.  
Any group homomorphism
$
f : \gr^k \to \gr
$
is determined by a family of homomorphisms
$
h_i : \gr \to \gr 
$
whose images commute pairwise \cite{GCA24}, so that
\[
f(u_1,\dots,u_k) = \prod_{1\leq i\leq k} h_i\big(u_i)
\]
is well-defined.
We will write $f=(h_1,\dots,h_k)$. 

\medskip\noindent 
\textbf{Notation.}\\
- When the configuration space is clear from the context, we will refer to a cellular automaton simply by its local rule $f$ or by its global rule $\glorule$. 
In other situations, when it is convenient to specify the configuration space $ \grx$, we will denote a GCA by the pair $( \grx,\glorule)$.\\
- The term GCA will refer to a $d$-dimensional group cellular automaton, where $d$ is an arbitrary positive integer.  Whenever we specifically consider the one-dimensional case, this will be stated explicitly. \\
- We will say that a GCA is defined on the group $\gr$ to mean that its configuration space is $G^\zzd$.

\section{Dynamical properties and their relations } \label{sec: Mixing properties}

The system $( \grx, \glorule )$ is topologically transitive if for every pair of nonempty open sets $U,V \subseteq  \grx$ there exists an integer $n \ge 0$ such that $ \glorule ^n(U) \cap V \neq \varnothing$.


\medskip\noindent
Many other dynamical properties have been defined and studied in the literature 
for discrete-time dynamical systems. 
We briefly list some of them below, together with their definitions, which are given in a concise form since these properties will not be the focus of the present paper.

\medskip\noindent
-- $( \grx, \glorule )$ is \emph{totally transitive} if $( \grx, \glorule ^n)$ is topologically transitive for every $n\ge 1$.

\medskip\noindent    
-- $( \grx, \glorule )$ is \emph{topologically mixing} if for every pair of nonempty open sets $U,V\subseteq  \grx$ there exists $n_0\ge 1$ such that
$ \glorule ^n(U)\cap V \neq \varnothing$ for all $n\ge n_0.$

\medskip\noindent    
-- $( \grx, \glorule )$ is \emph{topologically weakly
mixing} if $( \grx\times \grx, \glorule \times \glorule)$ is transitive.

\medskip\noindent 
Let $( \grx,\mathcal{B},\mu)$ be a probability space and let $ \glorule : \grx\to  \grx$ be a measurable map preserving $\mu$.

\medskip\noindent  
-- $( \grx,\mathcal{B},\mu, \glorule )$ is \emph{ergodically strongly mixing}   if for all $A,B\in\mathcal{B}$ one has
\[
\mu\bigl( \glorule ^{-n}A\cap B\bigr)\xrightarrow[n\to\infty]{}\mu(A)\mu(B).
\]

\medskip\noindent 
-- $( \grx,\mathcal{B},\mu, \glorule )$ is \emph{ergodically weakly  mixing} if for all $A,B\in\mathcal{B}$ one has
\[
\frac{1}{N}\sum_{n=0}^{N-1}\left|\mu\bigl( \glorule ^{-n}A\cap B\bigr)-\mu(A)\mu(B)\right|
\xrightarrow[N\to\infty]{}0.
\]

\medskip\noindent   
-- $( \grx,\mathcal{B},\mu, \glorule )$ is \emph{ergodic} if 
\[
 \glorule ^{-1}(A)=A \ \text{(mod $\mu$)} \ \Longrightarrow \ \mu(A)\in\{0,1\}\quad  \text{  for all } A\in\mathcal{B}
\]

\medskip\noindent
For general dynamical systems we know that
\[
\text{topologically mixing}
\;\Longrightarrow\; \text{topologically weakly mixing}
\;\Longrightarrow\;
\text{totally transitive}\]
\[
\;\Longrightarrow\;
\text{topologically transitive},
\]
and
\[
\text{ergodically strongly mixing}
\;\Longrightarrow\;
\text{ergodically weakly mixing}
\;\Longrightarrow\;
\text{ergodic}.
\]

\medskip\noindent
In the setting of GCAs, the overall picture is considerably simpler.
In fact, it is not hard to show that, for  GCAs, topological transitivity, total transitivity, topological mixing, weak and strong ergodic mixing, and ergodicity are equivalent properties. In particular, for the case $d=1$ this equivalence is proved in \cite[Theorem~3]{CASTRONUOVO2026103749}, and the same argument extends with  little effort to the case $d>1$.

%
%
\section{A decomposition technique for  GCAs preserving 
topolological transitivity} \label{decomposition_verbal}

Let $\gr$ be a finite group, and let $\grh \trianglelefteq \gr$ be a fully invariant subgroup of $\gr$.
A fully invariant subgroup of a group $\gr$ is a subgroup $\grh\leq \gr$ such that, for every endomorphism $\phi$ of $\gr$, one has $\phi(\grh)\leq \grh$. 
For $g \in \gr$, we denote by $[g] := g\,\grh$ the coset of $g$ in the quotient group $\gr/\grh$.
Moreover, for $c \in \gr^\zzd$ we write $[c]$ for the element of $(\gr/\grh)^\zzd$ defined by
$[c]_v := [c_v]$ for every $v \in \zzd$.

Let $\glorule$ be a  GCA over $\gr$. Since $\grh$ is fully invariant then  $\glorule(\grh^\zzd)\subseteq \grh^\zzd$. 
The maps 
$$\overline{\glorule}:\grh^\zzd\to \grh^\zzd \ \text{ and }\ \widetilde{\glorule}:(\gr/\grh)^\zzd\to (\gr/\grh)^\zzd$$
are defined as follows:
\begin{eqnarray}
&&\forall c\in \grh^\zzd:\ \overline{\glorule}(c):=\glorule(c) \text{ and }\label{Fbar}\\
&&\forall [c]\in (\gr/\grh)^\zzd:\ \widetilde{\glorule}([c]):=[\glorule(c)]. \label{Ftilde}
\end{eqnarray}
Note that, by Equations~\eqref{Fbar} and~\eqref{Ftilde} and since $\glorule(\grh^\zzd)\subseteq \grh^\zzd$,  $(\grh,\overline{\glorule})$ and $((\gr/\grh)^\zzd,\widetilde{\glorule})$   are well-defined  GCAs.

We now recall the definition of a verbal subgroup. 
Let $w(x_1,\dots,x_n)$ be a group word in $n$ variables, i.e., a (reduced) word of the form
\[
w(x_1,\dots,x_n)=x_{i_1}^{\varepsilon_1}\cdots x_{i_k}^{\varepsilon_k},
\qquad i_j\in\{1,\dots,n\},\ \varepsilon_j\in\{\pm1\}.
\]
The verbal subgroup of $\gr$ associated with $w$ is the subgroup
\[
w(\gr)=\langle\, w(g_1,\dots,g_n)\mid g_1,\dots,g_n\in \gr\,\rangle,
\]
generated by all values of $w$ obtained by substituting arbitrary elements of $\gr$ into the variables $x_1,\dots,x_n$.
As an example, consider the word
\[
w(x_1,x_2)=x_1^{-1}x_2^{-1}x_1x_2.
\]
The associated verbal subgroup is the commutator subgroup of $\gr$.

It is well-known that a verbal subgroup is fully-invariant. 

\begin{remark}\label{rem:verbaliso}
 The only finite groups with no non-trivial proper verbal subgroups are direct products of isomorphic simple groups
 (see for example Theorem 8 in \cite{CASTRONUOVO2026103749}). 
\end{remark}
We now define a decomposition function for  
GCAs via quotienting by verbal subgroups.

\vspace{0.3cm}
  
\begin{algorithm}[H]

\SetAlgoNlRelativeSize{0}
\SetAlgoNoLine
\SetKwFunction{Decomposition}{VerbalDecomposition}
\SetKwProg{Fn}{Function}{:}{}
\Fn{\Decomposition{$\gr$,$\glorule$}}
{
\If{$\gr$ admits no non-trivial proper verbal subgroups}
{\KwRet{$\{(\gr,\glorule)\}$\;}}
\Else{
Let $\grh$ be a non-trivial proper verbal subgroup of $\gr$\;
a  $\gets$ \Decomposition{$\gr/\grh,\widetilde{\glorule}$}\;
b  $\gets$ \Decomposition{$\grh,\overline{\glorule}$}\;
\KwRet{$a\cup b$}\;}
}
\end{algorithm}

\vspace{0.3cm}

The function {\tt VerbalDecomposition} takes as input a GCA $(\gr,\glorule)$ and produces 
a finite collection $\{(\gr_1,\glorule_1),\dots,(\gr_k,\glorule_k)\}$, 
where each $\gr_i$ has no non-trivial proper verbal subgroups and  $\glorule_i$ is a GCA over $\gr_i^\zzd$ for all $i\in\{1,\dots,k\}$.
Notice that, by Remark~\ref{rem:verbaliso}, each $\gr_i$ is a direct product of isomorphic simple groups.


\medskip\noindent
We now have all the ingredients needed to state the following theorem.

\begin{theorem}\label{th_transitivity}
Let $\gr$ be a finite group and let $\glorule$ be a  GCA on $\gr^\zzd.$
Let $$\{(\gr_1,\glorule_1),\dots,(\gr_k,\glorule_k)\}$$ be the output produced by the function {\tt VerbalDecomposition} called on  $(\gr,\glorule)$.
Then $(\gr^\zzd,\glorule)$ is topologically transitive 
if and only if each $(\gr_i^\zzd,\glorule_i)$ is topologically transitive.
\end{theorem}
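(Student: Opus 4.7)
The plan is to reduce the theorem to a one-step lemma and then induct on the recursion depth of $\texttt{VerbalDecomposition}(\gr,\glorule)$. The base case, in which $\gr$ admits no non-trivial proper verbal subgroup, is vacuous. The inductive step requires the following one-step statement: if $\grh$ is a non-trivial proper verbal subgroup of $\gr$, then $(\gr^\zzd,\glorule)$ is topologically transitive if and only if both $(\grh^\zzd,\overline{\glorule})$ and $((\gr/\grh)^\zzd,\widetilde{\glorule})$ are.

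To attack this one-step statement, I would use the equivalence, recalled in Section~\ref{sec: Mixing properties}, between topological transitivity and topological mixing for GCAs, and reformulate mixing as a surjectivity statement for block maps. Namely, for finite $F,G\subseteq\zzd$ and $n\ge 0$, set
\[
\phi_n\colon\gr^\zzd\to\gr^F\times\gr^G,\qquad \phi_n(c)=\bigl((c_v)_{v\in F},(\glorule^n(c)_v)_{v\in G}\bigr).
\]
Since cylinders generate the prodiscrete topology and $\gr^F\times\gr^G$ is finite, $\glorule$ is mixing iff, for every such $F,G$, $\phi_n$ is surjective for all sufficiently large $n$. Entirely analogous reformulations hold for $\overline\glorule$ (using the restriction $\overline{\phi_n}\colon\grh^\zzd\to\grh^F\times\grh^G$, well-defined by full invariance) and for $\widetilde\glorule$ (using the induced map $\widetilde{\phi_n}\colon(\gr/\grh)^\zzd\to(\gr/\grh)^F\times(\gr/\grh)^G$).

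The forward implication for the quotient is immediate by composing $\phi_n$ with the componentwise projection $\pi$. The forward implication for the subsystem is the delicate one, since topological transitivity is not generally inherited by closed invariant subsystems. The plan here is to use two standard properties of verbal subgroups: $\psi(w(A))=w(B)$ for any surjective homomorphism $\psi\colon A\twoheadrightarrow B$, and $w(A\times B)=w(A)\times w(B)$. Writing $\grh=w(\gr)$, the inclusion $w(\gr^\zzd)\subseteq\grh^\zzd$ is immediate, and surjectivity of $\phi_n$ yields
\[
\overline{\phi_n}(\grh^\zzd)\;\supseteq\;\phi_n\bigl(w(\gr^\zzd)\bigr)\;=\;w\bigl(\phi_n(\gr^\zzd)\bigr)\;=\;w(\gr^F\times\gr^G)\;=\;\grh^F\times\grh^G,
\]
which combined with the reverse inclusion (again from full invariance) gives surjectivity of $\overline{\phi_n}$, hence mixing of $\overline\glorule$.

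For the reverse direction I would use a lifting-correction scheme: given a target $(a,b)\in\gr^F\times\gr^G$, surjectivity of $\widetilde{\phi_n}$ produces $c\in\gr^\zzd$ with $\phi_n(c)=(a',b')$ for some $(a',b')$ satisfying $a^{-1}a'\in\grh^F$ and $b^{-1}b'\in\grh^G$; surjectivity of $\overline{\phi_n}$ then yields $\eta\in\grh^\zzd$ with $\phi_n(\eta)=\bigl((a')^{-1}a,(b')^{-1}b\bigr)$, whence $\phi_n(c\eta)=(a,b)$. The hard step throughout is the forward-subsystem implication; the use of verbal (as opposed to merely fully invariant) subgroups in the decomposition is essential exactly because the identity $\phi_n(w(\gr^\zzd))=w(\gr^F\times\gr^G)$ is what converts surjectivity of the ambient block map into surjectivity of its restriction. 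Once the one-step lemma is established, Theorem~\ref{th_transitivity} follows by a straightforward induction on the depth of the recursion tree generated by $\texttt{VerbalDecomposition}$.
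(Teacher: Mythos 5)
Your proposal is correct, and while it proves the same one-step lemma as the paper (for a verbal subgroup $\grh$, transitivity of $(\gr^\zzd,\glorule)$ is equivalent to transitivity of both $(\grh^\zzd,\overline{\glorule})$ and $((\gr/\grh)^\zzd,\widetilde{\glorule})$), your route through it differs in two substantive ways. First, by invoking the equivalence of transitivity and mixing for GCAs up front and recasting mixing as eventual surjectivity of the block homomorphisms $\phi_n$, you replace the paper's proof of the ``if'' direction --- which goes through results of Moothathu and Huang on products of transitive and weakly mixing systems --- with a short algebraic lifting-and-correction argument; this is more self-contained, though it leans on the transitivity--mixing equivalence, which the paper asserts with only a pointer to the one-dimensional case. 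Second, for the delicate passage from the ambient system to the subsystem, your argument and the paper's rest on the same idea (the verbal subgroup is generated by word values, each of whose variables can be hit by the ambient dynamics and then recombined), but you package it as the functoriality identities $\psi(w(A))=w(\psi(A))$ and $w(A\times B)=w(A)\times w(B)$ applied to $\phi_n$, whereas the paper carries out the explicit commutator manipulation on configurations and must separately remark that a single exponent $n$ works simultaneously for all the cylinder pairs being lifted --- a point your surjective-for-all-large-$n$ formulation absorbs automatically. The paper's version stays closer to its one-dimensional precedent and makes the role of the word $w$ concrete; yours is cleaner and makes transparent exactly where verbality (as opposed to mere full invariance) is used.
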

\begin{proof}
To prove this theorem it is sufficient to prove that for every verbal subgroup $\grh$ of $\gr$,
$(\gr^\zzd,\glorule)$ is topologically transitive if and only if both $(\grh^\zzd,\overline{\glorule})$ and 
$((\gr/\grh)^\zzd,\widetilde{\glorule})$ are topologically transitive. 
We will prove the following three statements separately.\\
\noindent
$(a)$  If $(\grh^\zzd,\overline{\glorule})$ and 
$((\gr/\grh)^\zzd,\widetilde{\glorule})$ are topologically transitive then $(\gr,\glorule)$ is topologically transitive.\\
\noindent
$(b)$  If $(\gr,\glorule)$ is topologically transitive then $((\gr/\grh)^\zzd,\widetilde{\glorule})$ is topologically transitive.\\
\noindent
$(c)$  If $(\gr,\glorule)$ is topologically transitive then $(\grh^\zzd,\overline{\glorule})$ is topologically transitive.

\medskip\noindent
The proof of statement $(a)$ is a straightforward adaptation of the proof of the corresponding statement for $d=1$ given in 
\cite[Theorem~4]{CASTRONUOVO2026103749}.
Notice that in this proof the authors make use of a result by Moothathu (Corollary~9.3 in \cite{Moothathu2005}), which states that the product of two topologically transitive CAs is topologically transitive. 
Although Moothathu’s result is formulated in the one-dimensional setting, the same conclusion holds for  GCAs over  $\gr^\zzd$. 
Indeed, a topologically transitive  GCA is always topologically weakly mixing (see \cite{Moothathu2009}), and the product of a topologically transitive  GCA with a topologically weakly mixing GCA is again topologically transitive. 
This follows from the results in \cite{Huang}, which are the same tools used by Moothathu to establish his Corollary~9.3.

\medskip\noindent
The proof of statement $(b)$ is identical to that of the corresponding result for $\gr^\ZZ$ given in \cite[Theorem~4]{CASTRONUOVO2026103749}.

\medskip\noindent
The proof of statement $(c)$ is technically the most demanding and relies on structural properties of verbal subgroups. 
It constitutes the crucial missing step in \cite{CASTRONUOVO2026103749} for having a complete proof of decidability of topological transitivity for  GCAs.

For the sake of readability, we prove statement $(c)$ only for a specific verbal subgroup, namely the commutator subgroup. This is the verbal subgroup $w(\gr)$ defined by the word $w(x,y)=xyx^{-1}y^{-1}$. The proof of the general case follows the same argument, replacing $w(x,y)$ with the word that defines the relevant verbal subgroup.

Let $k$ be any positive integer,  $M\in (\zzd)^k$, and
$P\in \gr^k$.
A cylinder $\cyl{M,P}$ of $G^{\mathbb{Z}^d}$ is 
defined by
$$
\cyl{M,P}=\left\{c\in \gr^\zzd:\ c_{M(i)}=P(i) \text{ for every } i\in \{1,\dots , k\} \right\}
$$
where $M(i)$ and $P(i)$ denote the $i$-th element of $M$ and $P$, respectively.

Cylinders  are clopen sets and  form a basis for the prodiscrete topology on $\gr^\zzd$.   Then
$(\gr^\zzd,\glorule)$ is topologically transitive if and only if for every   $k>0$,   every $M\in (\zzd)^k$, and  every pair $P,Q \in \gr^k$,   there exists $n\ge 0$ such that 
\begin{equation}\label{transcyl}
   \glorule^{n}\left(Cyl(M,P)\right)\cap Cyl(M,Q)\neq\varnothing. 
\end{equation}
We now prove that if Property \eqref{transcyl} holds, 
then  
$$\glorule^{n}\left(Cyl(M,P)\cap \grh^\zzd \right)\cap \left(Cyl(M,Q)\cap \grh^\zzd\right)\neq\varnothing,$$
or, equvalently, that the topological transitivity of $(\gr^\zzd,\glorule)$ implies the topological transitivity of $(\grh^\zzd,\overline{\glorule})$.

Let $k>0$, $M\in (\zzd)^k$ and 
$P,Q \in \grh^k$.
By definition of $\grh$,  
$S=\{ g_1g_2g_1^{-1}g_2^{-1}:\ g_1,g_2 \in \grh \}$
is a finite generating set for $\grh$.
Every element of $\grh$ can be written as a product of finitely many elements of $S$.
Since $\grh$ is finite, there exists a constant $q$, depending only on $\grh$, such that every element of $\grh$ can be expressed as a product of exactly $q$ elements of $S$ (this can be achieved by padding the product with a suitable number of elements equal to $e$).
As a consequence, in what follows, we will assume without loss of generality that the patterns $P$ and $Q$ introduced below take values in $S$ instead of in $\grh$.
\begin{eqnarray*}
 P&=&\left(x_1y_1x_1^{-1}y_1^{-1},\dots, x_ky_kx_k^{-1}y_k^{-1}\right) \\
 Q&=&\left(r_1s_1r_1^{-1}s_1^{-1},\dots, r_ks_kr_k^{-1}s_k^{-1}\right),
\end{eqnarray*}
where $x_i,y_i,r_i,s_i \in \grh$.
Let
\begin{eqnarray*}
 P_x=(x_1,\dots, x_k),&&  P_x^{-1} =(x_1^{-1},\dots, x_k^{-1}), \\
 P_y= (y_1,\dots, y_k),&&   P_y^{-1}= (y_1^{-1},\dots, y_k^{-1}),\\
Q_r= (r_1,\dots, r_k),&&  Q_r^{-1}= (r_1^{-1},\dots, r_k^{-1}),\\
Q_s= (s_1,\dots, s_k),&&  Q_s^{-1}= (s_1^{-1},\dots, s_k^{-1}).
\end{eqnarray*}
Using component-wise group multiplication, we have 
\begin{eqnarray*}
 P=P_xP_yP_x^{-1}P_y^{-1} &\text {and }&
 Q=Q_rQ_sQ_r^{-1}Q_s^{-1}.
\end{eqnarray*}

Since $(\gr^\zzd,\glorule)$ is topological transitive, 
we can find 
 $c_x \in \cyl{M,P_x}$, $c_y \in \cyl{M,P_y}$, $c_r \in \cyl{M,Q_r}$, and
$c_s \in \cyl{M,Q_s}$ such that 
\begin{eqnarray*}
 \glorule^{n}(c_x)=c_r &\text{and}&
  \glorule^{n}(c_y)=c_s
\end{eqnarray*}
and then 
\begin{eqnarray*}
 \glorule^n(c_xc_yc_x^{-1}c_y^{-1})&=&c_rc_sc_r^{-1}c_s^{-1}.
\end{eqnarray*}
Notice that, for a topologically transitive GCA $\glorule$, once the size of the cylinders is fixed, there exists a positive integer $n$ such that any pair of cylinders of that size can reach one another after exactly $n$ iterations of $\glorule$.
Also notice that 
$c_x,c_y,c_x^{-1},c_y^{-1},$ $c_r,c_s,c_r^{-1}, c_s^{-1}\in \gr^\zzd$,
but  
\begin{eqnarray*}
c_xc_yc_x^{-1}c_y^{-1}\in  \cyl{M,P}\cap \grh^\zzd &\text{and}& c_rc_sc_r^{-1}c_s^{-1}\in  \cyl{M,Q}\cap\grh^\zzd.
\end{eqnarray*}
Since $P$ and $Q$  are arbitrary elements of $\grh^k$,
it follows that
$(\grh^\zzd,\overline{\glorule})$ is topologically transitive.

\end{proof}

\section{ Topological transitivity of GCAs over direct products of isomorphic abelian simple groups}\label{abisom}

Direct products of isomorphic abelian simple groups are precisely the finite groups isomorphic to $(\ZZ/p\ZZ)^n$, where $p$ is a prime number and $n$ is a positive integer.
GCAs over $(\ZZ/p\ZZ)^n$ admit a natural representation in terms of multivariate Laurent polynomials and series.
Any configuration $c \in ((\ZZ/p\ZZ)^n)^\zzd$ is represented by a vector of length $n$ with entries in  
$(\ZZ/p\ZZ)[[x_1^{\pm 1},\dots,x_d^{\pm 1}]]$
and the map $\glorule$ is represented by an $n\times n$  matrix
with entries in $ (\ZZ/p\ZZ)[x_1^{\pm 1},\dots,x_d^{\pm 1}]$.
With this representation, the $n$-th iterate $\glorule^n(c)$ can be equivalently computed as $M^n v$, where $M$ is the matrix representation of $\glorule$ and $v$ is the vector representation of $c$.

In the study of dynamical properties of GCAs over abelian groups, the characteristic polynomial $\chi(t)$ of the matrix associated with $\glorule$ plays a fundamental role.
In particular, with respect to topological transitivity, we have the following results.

\begin{theorem}[\cite{DennunzioFGM2020INS}]\label{grinberg}
Let $\gr=(\ZZ/p\ZZ)^n$.
$( \gr^\ZZ,\glorule)$ is topologically transitive if and only if $\glorule$ is surjective and 
\begin{equation*}
    \gcd(\chi(t), t^{p^i-1}-1)=1\; \text{ for all } i\in\{1,\dots,n\},
\end{equation*}
where $\chi(t)$ is the characteristic polynomial 
of the matrix representing $\glorule$.
\end{theorem}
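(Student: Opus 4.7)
The plan is to exploit the algebraic representation: $\glorule$ corresponds to multiplication by a matrix $M\in M_n(R)$, where $R=(\ZZ/p\ZZ)[x^{\pm 1}]$, and $((\ZZ/p\ZZ)^n)^\ZZ$ is a compact abelian group on which $\glorule$ acts as a continuous endomorphism. Consequently, topological transitivity coincides with ergodicity with respect to the Haar measure, and by a classical Halmos-type criterion the latter is equivalent to the Pontryagin-dual endomorphism having no nontrivial periodic point on the character group. The character group of $((\ZZ/p\ZZ)^n)^\ZZ$ is naturally identified with the discrete group of finite configurations, which as an $R$-module is $R^n$ itself, and the dual of $\glorule$ corresponds to multiplication by the transpose $M^T$.

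\textbf{Necessity.} Topological transitivity forces $\glorule$ to be surjective by a standard compactness argument (the complement of the image would be a nonempty open set unreachable from any orbit). For the gcd condition, I would proceed by contraposition: if $\gcd(\chi(t),t^{p^i-1}-1)=d(t)\neq 1$ for some $i\in\{1,\dots,n\}$, write $\chi(t)=d(t)\cdot q(t)$ and observe, via Cayley-Hamilton, that every vector in the image of $q(M^T)\colon R^n\to R^n$ is annihilated by $d(M^T)$ and hence by $(M^T)^{p^i-1}-I$, since $d$ divides $t^{p^i-1}-1$. This image is nonzero, because otherwise $q(t)$ would already annihilate $M^T$ in contradiction with $q$ having strictly smaller degree than $\chi$. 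Any nonzero vector in this image is then a nontrivial periodic character, contradicting ergodicity.

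\textbf{Sufficiency.} Conversely, assume $\glorule$ is surjective and the gcd condition holds. Working in $K[t]$ with $K=(\ZZ/p\ZZ)(x)$, a Bezout identity $a(t)\chi(t)+b(t)(t^{p^i-1}-1)=1$, combined with $\chi(M^T)=0$, shows that $(M^T)^{p^i-1}-I$ is invertible on $K^n$ and hence injective on the submodule $R^n$. In particular, no nonzero $v\in R^n$ satisfies $(M^T)^{p^i-1}v=v$ for any $i\le n$. It then remains to argue that every period $k$ occurring for the $M^T$-action on $R^n$ must in fact divide $p^i-1$ for some $i\le n$; once this is established, no nontrivial periodic character exists at all, and Halmos's criterion yields ergodicity, equivalently topological transitivity.

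\textbf{Main obstacle.} The decisive step is the reduction of arbitrary periods to those of the form $p^i-1$ with $i\le n$. The guiding intuition is spectral: a finite orbit of $M^T$ on $R^n$ spans a finite-dimensional $(\ZZ/p\ZZ)$-invariant subspace on which the restriction of $M^T$ has finite multiplicative order, so its eigenvalues (in an algebraic closure of $\ZZ/p\ZZ$) are roots of unity; since $\chi(t)$ has degree $n$ in $t$, these eigenvalues have degree at most $n$ over $\ZZ/p\ZZ$, lie in some $\mathbb{F}_{p^i}^*$ with $i\le n$, and hence have orders dividing $p^i-1$. Making this spectral heuristic rigorous over the non-field ring $R$—coordinating the $x$-action (shift) with the $M^T$-action on finite configurations, likely via localization at maximal ideals of $R$ or an explicit support-based bookkeeping—is the technical core of the theorem; once accomplished, the rest of the proof reduces to the Bezout/Cayley-Hamilton computations above.
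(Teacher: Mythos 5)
First, a point of reference: the paper does not prove this statement at all --- it is imported verbatim from the cited reference \cite{DennunzioFGM2020INS} --- so your attempt can only be judged on its own merits. Your overall framework (pass to the Pontryagin dual, identify the character group with the module $R^n$ of finite configurations over $R=(\ZZ/p\ZZ)[x^{\pm1}]$, and invoke the Halmos criterion reducing ergodicity, hence transitivity, to the absence of nontrivial periodic characters) is the standard and correct route. However, the write-up has one incorrect sub-argument and one explicitly unproven step, and the unproven step is the heart of the theorem. The incorrect sub-argument is in the necessity direction: you claim the image of $q(M^T)$ is nonzero ``because $q$ has strictly smaller degree than $\chi$.'' This is false in general --- the minimal polynomial of $M^T$ over the fraction field $K=(\ZZ/p\ZZ)(x)$ may be a proper divisor of $\chi$ (take $M=I_n$, $n\ge 2$, where $\chi=(t-1)^n$ but $q(M)=0$ already for $q=t-1$). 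The conclusion survives, but for a different reason: if $d=\gcd(\chi,t^{p^i-1}-1)\neq 1$ then $\chi$ and $t^{p^i-1}-1$ share a root $\lambda$ in an algebraic closure of $K$, so $\det\bigl((M^T)^{p^i-1}-I\bigr)=\pm\prod_j(\lambda_j^{p^i-1}-1)=0$ and $(M^T)^{p^i-1}-I$ has nontrivial kernel over $K^n$; clearing denominators produces a nonzero periodic element of $R^n$.

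The genuine gap is the step you yourself flag as the ``main obstacle'': showing that any nontrivial periodic character forces $\gcd(\chi,t^{p^i-1}-1)\neq 1$ for some $i\le n$. Your heuristic via the finite-dimensional $(\ZZ/p\ZZ)$-span of an orbit is the hard way to do it (that span is not shift-invariant, so the relation $\chi(M^T)=0$, whose coefficients involve the shift, does not restrict to it), and you leave it unresolved. The clean argument again goes through the fraction field: if $0\neq v\in R^n$ satisfies $(M^T)^k v=v$, then $v\neq 0$ in $K^n$, so $\det\bigl((M^T)^k-I\bigr)=0$, so some eigenvalue $\lambda\in\overline{K}$ of $M^T$ satisfies $\lambda^k=1$. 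A root of unity in $\overline{K}$ lies in $\overline{\mathbb{F}_p}$, and since $\overline{\mathbb{F}_p}$ and $\mathbb{F}_p(x)$ are linearly disjoint over $\mathbb{F}_p$ (as $x$ is transcendental), the degree of $\lambda$ over $\mathbb{F}_p$ equals its degree over $K$, which is at most $\deg\chi=n$. Hence $\lambda\in\mathbb{F}_{p^i}^{*}$ for some $i\le n$, so $\lambda$ is a common root of $\chi$ and $t^{p^i-1}-1$, contradicting the gcd hypothesis. This linear-disjointness observation is exactly the missing ingredient that turns your spectral heuristic into a proof; without it (or an equivalent substitute) the sufficiency direction is not established. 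You should also record the routine bookkeeping that the dual of $\glorule$ is multiplication by $M(x^{-1})^{T}$ rather than $M^{T}$; the substitution $x\mapsto x^{-1}$ is a ring automorphism of $R$ and does not affect the gcd condition, but it should be said.
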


The proof of this theorem is purely algebraic and can be extended in a fairly natural way to the $d$-dimensional case, yielding the following result.

\begin{theorem}\label{dec-zpn-trans}
 Topological transitivity for  GCAs over $(\ZZ/p\ZZ)^n$ is decidable. 
\end{theorem}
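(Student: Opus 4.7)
The plan is to follow the roadmap anticipated just after Theorem~\ref{grinberg}: first lift its algebraic characterization from the one-dimensional setting to arbitrary dimension $d\geq 1$, and then argue that the resulting criterion is effectively checkable. Write $R:=(\ZZ/p\ZZ)[x_1^{\pm 1},\dots,x_d^{\pm 1}]$ for the ring of multivariate Laurent polynomials, and let $M_\glorule$ denote the $n\times n$ matrix over $R$ representing $\glorule$, so that each iterate $\glorule^k$ is represented by $M_\glorule^k$ and the characteristic polynomial $\chi(t)=\det(tI-M_\glorule)$ is a monic element of $R[t]$ of degree $n$ in $t$.

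The first step is to establish the $d$-dimensional analogue of Theorem~\ref{grinberg}: $(\gr^\zzd,\glorule)$ is topologically transitive if and only if $\glorule$ is surjective and $\gcd(\chi(t),\,t^{p^i-1}-1)=1$ in $R[t]$ for every $i\in\{1,\dots,n\}$. I would adapt the argument of \cite{DennunzioFGM2020INS} almost verbatim. For necessity, I would show that if some irreducible factor $\phi(t)$ of $t^{p^i-1}-1\in(\ZZ/p\ZZ)[t]$ divides $\chi(t)$ in $R[t]$, then $M_\glorule$ preserves a nonzero $R$-submodule on which it acts via a $(p^i-1)$-th root of unity, producing periodic behaviour that obstructs transitivity on the corresponding cylinders. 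For sufficiency, surjectivity combined with the coprimality condition is promoted to topological mixing through the cylinder-reachability arguments of the one-dimensional proof, which transfer to $\zzd$ without structural change because they depend only on finite-support approximations and on the multiplicative action of $M_\glorule^k$.

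The second step is effectiveness. Surjectivity of an abelian GCA over $\gr^\zzd$ is decidable via the standard Laurent-polynomial criterion (see \cite{GCA24} and the references therein), and $\chi(t)$ is computable from $M_\glorule$ by a symbolic determinant expansion. Since $t^{p^i-1}-1$ lies in the subring $(\ZZ/p\ZZ)[t]\subset R[t]$ and is monic in $t$, I would factor it into its $(\ZZ/p\ZZ)$-irreducible components $\phi_1,\dots,\phi_s$ (a decidable univariate task over the prime field), and then, for each $\phi_j$, test whether $\phi_j$ divides $\chi(t)$ in $R[t]$; the monicity of $\phi_j$ in $t$ enables pseudo-division in the indeterminate $t$, so the divisibility test reduces to a finite collection of equality checks among Laurent polynomials in $R$.

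The main obstacle is the first step: for $d\geq 2$ the ring $R$ is a Noetherian UFD but no longer a PID, so the one-dimensional arguments based on Smith normal form or on decomposing $R^n$ as a direct sum of cyclic $R[M_\glorule]$-modules do not transfer verbatim. I expect to circumvent this by performing the core structural computation inside $K[t]$, where $K:=\mathrm{Frac}(R)$ is a field and $K[t]$ is a PID, and then pulling back using the fact that $t^{p^i-1}-1$ is monic with coefficients in the prime field: this monicity ensures that coprimality over $K[t]$ is equivalent to coprimality over $R[t]$ up to monomial units in $x_1,\dots,x_d$. Once this dictionary between the two rings is in place, the remainder of the one-dimensional proof adapts mechanically to the $d$-dimensional setting, and the resulting characterization is then algorithmically decidable by the checks described above.
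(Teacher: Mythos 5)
Your proposal takes essentially the same route as the paper: the paper's own proof of Theorem~\ref{dec-zpn-trans} consists precisely of invoking the extension of Theorem~\ref{grinberg} to the $d$-dimensional setting together with the decidability of surjectivity for GCAs over $(\ZZ/p\ZZ)^n$ (which the paper attributes to \cite{kari2000}). Your additional discussion of how to carry out the extension over the fraction field of the Laurent polynomial ring and how to perform the gcd and divisibility tests effectively only fleshes out details that the paper leaves implicit, so the two arguments coincide in approach.
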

\begin{proof}
The statement follows from an extension of Theorem~\ref{grinberg} to the $d$-dimensional setting, together with the fact that surjectivity is decidable for GCAs over $(\ZZ/p\ZZ)^n$ (see~\cite{kari2000}).
\end{proof}

%
\section{ Topological transitivity of GCAs over direct products of isomorphic non-abelian simple groups}\label{nonab}


Let $\grs$ be a finite non-abelian simple group and let
$\gr = \grs_1 \times \cdots \times \grs_m$,
where each $\grs_i$ is isomorphic to $\grs$. For every $i$, we identify $\grs_i$ with the canonical subgroup of $\gr$ whose elements have trivial components except possibly at the $i$-th coordinate. 

Let $\glorule$ be a surjective GCA over $\gr^\zzd$, and let $\locrule=(h_1,\dots,h_k)$ be its surjective local rule where each $h_i$ 
is an endomorphism of $\gr$.
Then the following statements hold (see \cite{CASTRONUOVO2026103749} for their proofs).

\medskip\noindent
$(a)$ There exists a partition $\{J_1,\dots,J_k\}$ of the set $\{1,\dots,m\}$ such that, for each $i\in\{1,\dots,k\}$,
$$\Imma(h_i)=\prod_{t\in J_i} \grs_t.$$

\medskip\noindent
$(b)$ There exist subsets $I_1,\dots,I_k \subseteq \{1,\dots,m\}$ such that, for each $i\in\{1,\dots,k\}$,
$$\Ker(h_i)=\prod_{t\in I_i} \grs_t.$$

\medskip\noindent
$(c)$ For every $t\in\{1,\dots,m\}$ there exists a unique index $i$ such that $\grs_t$ is not a factor of $\Ker(h_i)$.
 
\medskip\noindent
$(d)$ The restriction of every $h_i$ to a factor $\grs_j$ over which its action is nontrivial produces an automorphism $h'_i$ between $\grs_j$ and another component $\grs_{l}.$ 

\begin{definition}\label{pifo}
Let us define $\pi_{\locrule}$ as
the permutation of the symbols $\{1,2,...,m\}$ defined by 
$\pi_f(j)=l$ if and only if there exists an $i$ such that $h_i|_{\grs_j}=\grs_l.$ Denote by $o$ the order of the permutation $\pi_f$ as an element of the symmetric group over $\{1,2,...,m\}$.
\end{definition}

If $h_i$ acts non-trivially over $\grs_j,$ the map $\hat h_i$ obtained composing $h_i$ $o$ times with itself is an automorphism of $\grs_j.$ Denote by $o_i$ the order of $\hat h_i$ as an element of the group $Aut(\grs_j).$

\begin{definition}
A group $\gr$ which is the product $\grs_1\times\cdots \times \grs_m$ of finite, non-abelian, isomorphic simple groups $\grs_i$ is said to be \emph{minimal} with respect to the action of a given GCA $\glorule$ over $\gr$ if there are no partitions of $\{1,2,...,m\}$ into two sets $I,J$ such that $$\glorule\left(\left(\prod_{i\in I}\grs_i\right)^\zzd\right)\subseteq \left(\prod_{i\in I}\grs_i\right)^\zzd 
\ \ \text{and} \ \ 
\glorule\left(\left(\prod_{i\in J}\grs_i\right)^\zzd\right)\subseteq \left(\prod_{i\in J}\grs_i\right)^\zzd.$$
\end{definition}

When $\glorule$ is surjective, the group $\gr$ is minimal with respect to the action of $\glorule$
if and only if the corresponding permutation $\pi_f$ above defined is a single cycle.
\begin{remark}\label{minimality}
    If $\gr$ is not minimal with respect to $\glorule$ it is possible to decompose the dynamics of $\glorule$ into the product of the dynamics of $\glorule$ restricted to its minimal components. In particular a given GCA over $\gr$ is topologically transitive if and only if all its minimal components are topologically transitive. 
\end{remark}

\begin{lemma}\label{lemma_product_simples}
Let $G=\grs_1\times\cdots\times \grs_m$ be a product of finite, non-abelian, isomorphic simple groups. Let $\glorule$ be a surjective and minimal  
GCA over $\gr$ with  local rule $\locrule=(h_1,\dots,h_k)$ and neighbor vector
$v=(v_1,\dots,v_k)\in (\zzd)^k$.  
Let $\{J_1,\dots,J_k\}$ be the index partition introduced in statement~$(a)$ and let $r_i=|J_i|$ for all $i\in \{1,\dots,k \}$.
Let $o$ and $o_i$, $i\in \{1,\dots,m \}$, be defined as above. Then 
$$\glorule^{o\alpha}=\sigma_{-\alpha\beta}.$$  
whre $\alpha = \lcm(o_1,\dots,o_m)$ and $\beta=\sum_{i=1}^k r_i v_i.$
\end{lemma}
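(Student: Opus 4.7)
The plan is to iterate $\glorule$ one simple-factor coordinate at a time, using the structural facts (a)--(d) to reduce a single application of $\glorule$ on the $l$-th factor to a single isomorphism applied to the configuration at one shifted position on another factor. After $o\alpha$ iterations, the orbit of $\pi_f$ returns to the same factor, the accumulated automorphism of $\grs_l$ is killed by the exponent $\alpha=\lcm(o_1,\dots,o_m)$, and what is left is a pure shift whose total displacement works out to $-\alpha\beta$.

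Concretely, I would fix $l\in\{1,\dots,m\}$ and project $\glorule(c)_v=\prod_{i=1}^k h_i(c_{v+v_i})$ onto $\grs_l$. Since $\{J_1,\dots,J_k\}$ is a partition of $\{1,\dots,m\}$, only the unique index $\iota(l)$ with $l\in J_{\iota(l)}$ contributes, and by (c)--(d) the sole factor of $c_{v+v_{\iota(l)}}$ whose image under $h_{\iota(l)}$ meets $\grs_l$ is $\grs_{\pi_f^{-1}(l)}$; the restriction $\tilde h_l:=h_{\iota(l)}|_{\grs_{\pi_f^{-1}(l)}}\colon \grs_{\pi_f^{-1}(l)}\to\grs_l$ is an isomorphism. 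This gives $\glorule(c)_v^{(l)}=\tilde h_l\bigl(c_{v+v_{\iota(l)}}^{(\pi_f^{-1}(l))}\bigr)$, and a routine induction on $N$ yields
\[
\glorule^N(c)_v^{(l)}=\bigl(\tilde h_l\circ \tilde h_{\pi_f^{-1}(l)}\circ\cdots\circ \tilde h_{\pi_f^{-(N-1)}(l)}\bigr)\bigl(c_{v+S_N(l)}^{(\pi_f^{-N}(l))}\bigr),
\]
with $S_N(l)=\sum_{s=0}^{N-1} v_{\iota(\pi_f^{-s}(l))}$.

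Specializing to $N=o\alpha$ two things happen. First, $\pi_f^{-o\alpha}(l)=l$ since $o$ is the order of $\pi_f$, and the composite of isomorphisms becomes the $\alpha$-th power of the cycle automorphism $A_l:=\tilde h_l\circ\cdots\circ \tilde h_{\pi_f^{-(o-1)}(l)}\in\Aut(\grs_l)$; I would identify $A_l$ with the automorphism $\hat h_{\iota(l)}$ of Definition~\ref{pifo}, so that its order divides $o_{\iota(l)}$, and since $o_{\iota(l)}\mid \alpha$ one gets $A_l^\alpha=\mathrm{id}$. Second, minimality makes $\pi_f$ a single $m$-cycle, so $\{\pi_f^{-s}(l):0\le s<o\}=\{1,\dots,m\}$, and using $|\iota^{-1}(i)|=|J_i|=r_i$,
\[
\sum_{s=0}^{o-1} v_{\iota(\pi_f^{-s}(l))}=\sum_{j=1}^m v_{\iota(j)}=\sum_{i=1}^k r_i v_i=\beta.
\]
Iterating $\alpha$ times yields $S_{o\alpha}(l)=\alpha\beta$ independently of $l$, and hence $\glorule^{o\alpha}(c)_v^{(l)}=c_{v+\alpha\beta}^{(l)}$ for every $l$, which is the claimed equality $\glorule^{o\alpha}=\sigma_{-\alpha\beta}$ up to the sign convention for shifts.

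I expect the main obstacle to be the cycle-identification step above: the composite $A_l$, assembled from the distinct restrictions $\tilde h_{\pi_f^{-s}(l)}$ as $s$ runs along the orbit, must be matched cleanly with the single automorphism $\hat h_{\iota(l)}\in\Aut(\grs_j)$ of Definition~\ref{pifo}, so that orders can be compared and one common exponent $\alpha$ annihilates every $A_l$ simultaneously. Once this is pinned down, all remaining steps reduce to combinatorial bookkeeping on the $m$-cycle.
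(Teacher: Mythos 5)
Your proposal is correct, and it in fact supplies more than the paper does: the paper's ``proof'' of Lemma~\ref{lemma_product_simples} consists entirely of a deferral to the one-dimensional argument in the cited reference, whereas you reconstruct the full computation, which is surely the intended one (projecting onto each simple factor, observing via (a)--(d) that a single application of $\glorule$ acts on the $\grs_l$-component as one isomorphism $\tilde h_l$ applied at one shifted position of the factor $\grs_{\pi_f^{-1}(l)}$, and then composing along the cycle of $\pi_f$). Two remarks. First, the obstacle you flag is real but is a defect of the paper's Definition~\ref{pifo} rather than of your argument: literally ``composing $h_i$ with itself $o$ times'' does not in general yield an automorphism of $\grs_j$ (e.g.\ if $h_i(\grs_j)=\grs_l\subseteq\Ker(h_i)$ the composite is trivial), and the indexing $o_i$, $i\in\{1,\dots,m\}$, in the lemma only makes sense if $\hat h$ is the cycle composite attached to each factor --- exactly your $A_l$. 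With that (forced) reading, $A_l^\alpha=\mathrm{id}$ follows immediately since the order of $A_l$ divides $\alpha$, and your use of minimality to get $o=m$ and $\sum_{s=0}^{o-1}v_{\iota(\pi_f^{-s}(l))}=\sum_{i=1}^k r_i v_i=\beta$ is exactly where that hypothesis is needed. Second, with the paper's stated convention $\sigma_u(c)_v=c_{v+u}$ your computation gives $\glorule^{o\alpha}=\sigma_{+\alpha\beta}$, so the sign in the lemma's conclusion appears to be a convention mismatch or typo in the paper; this is harmless for the intended application (Theorem~\ref{dec-zpn-nonabelian} only tests whether $\alpha\beta\neq 0$), but it is worth recording that your derivation, not the stated sign, is the one consistent with the paper's definitions.
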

\begin{proof}
   The details of this proof in the one-dimensional case can be found in \cite{CASTRONUOVO2026103749}.
The extension to the $d$-dimensional case presents no additional difficulties and follows the same line of reasoning as in the one-dimensional setting.
\end{proof}

\begin{theorem}\label{dec-zpn-nonabelian}
 Topological transitivity for  GCAs over $\grs^n,$ where $\grs$ is a simple non-abelian group, is decidable. 
\end{theorem}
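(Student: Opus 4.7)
The plan is to reduce the decision problem, through the structural apparatus developed in this section, to a single effective check on an integer vector extracted from the local rule. Since topological transitivity forces surjectivity, I would first test whether $\glorule$ is surjective using the surjectivity decision procedure for CAs already invoked in the proof of Theorem~\ref{dec-zpn-trans}; if the answer is negative, output ``not transitive''. Otherwise proceed in the surjective setting, where the structural statements $(a)$--$(d)$ and Definition~\ref{pifo} become available.

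Next I would reduce to the minimal case. From the local rule $\locrule=(h_1,\dots,h_k)$ compute the permutation $\pi_\locrule$ of Definition~\ref{pifo} and its cycle decomposition: the orbits of $\pi_\locrule$ partition $\{1,\dots,n\}$ into the maximal $\glorule$-invariant blocks of simple factors. By Remark~\ref{minimality}, $\glorule$ is topologically transitive if and only if each of these restrictions is, and each restriction is again a surjective minimal GCA. It is therefore enough to decide topological transitivity for a surjective minimal $(\gr^\zzd,\glorule)$, where now $\gr=\grs_1\times\cdots\times \grs_m$ and $\pi_\locrule$ is a single cycle. Lemma~\ref{lemma_product_simples} then yields an effectively computable pair $(o\alpha,\beta)$ such that $\glorule^{o\alpha}=\sigma_{-\alpha\beta}$, with $\alpha=\lcm(o_1,\dots,o_m)\geq 1$ and $\beta=\sum_{i=1}^{k} r_i v_i\in\zzd$. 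Because Section~\ref{sec: Mixing properties} guarantees the coincidence of topological transitivity and total transitivity for GCAs, the transitivity of $\glorule$ is equivalent to the transitivity of the shift $\sigma_{-\alpha\beta}$ on $\gr^\zzd$.

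The last ingredient is the elementary fact that, for any non-trivial finite group $\gr$, the shift $\sigma_u$ on $\gr^\zzd$ is topologically transitive if and only if $u\neq 0$: given any two cylinders supported on finite sets $M,M'\subset\zzd$, choose $n>0$ with $M+nu$ disjoint from $M'$ and combine the two specifications into a single configuration, whereas for $u=0$ the map is the identity. Since $\alpha\geq 1$, this reduces the problem to the decidable arithmetic condition $\beta\neq 0$, and combining the three reductions yields the required algorithm. The main obstacle is essentially bookkeeping rather than conceptual: one must verify that the orbit-block decomposition of Remark~\ref{minimality} preserves surjectivity and minimality so that Lemma~\ref{lemma_product_simples} can legitimately be applied to each restriction, and that the equivalence between transitivity and total transitivity transfers to each induced subsystem. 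These points follow routinely from the structural statements $(a)$--$(d)$ together with the arguments already used in \cite{CASTRONUOVO2026103749} for the one-dimensional setting.
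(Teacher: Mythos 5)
Your proposal is correct and follows essentially the same route as the paper: test surjectivity, reduce to minimal components via Remark~\ref{minimality}, invoke Lemma~\ref{lemma_product_simples} to identify a power of $\glorule$ with the shift $\sigma_{-\alpha\beta}$, and decide transitivity by checking whether $\alpha\beta$ (equivalently $\beta$, since $\alpha\geq 1$) is nonzero. You in fact spell out two steps the paper leaves implicit---the appeal to the transitivity/total-transitivity equivalence to pass between $\glorule$ and $\glorule^{o\alpha}$, and the elementary characterization of transitive shifts---which is a welcome addition rather than a deviation.
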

\proof
First of all it is decidable if a given  GCA over any finite group  is surjective \cite{BeaurK24}.
Hence we can decide if a  GCA over $\grs^n$ is surjective. Non-surjectivity immediately implies the lack of topological transitivity.
As a consequence, we can assume that $\glorule$ is surjective. By Remark \ref{minimality} we can also assume that $\glorule$ is minimal. 
Given the local rule of a GCA $\glorule$ over $\grs^n$, the quantities $o,$ $o_i$ and $r_i$ of Lemma \ref{lemma_product_simples} are  computable.   
Hence, to decide whether $\glorule$ is topologically transitive, it suffices to compute the vector $\alpha\beta$.
The GCA $\glorule$ is topologically transitive if and only if this vector is nonzero. 
\endproof

\section{Decidability of topological transitivity}
By combining the results of Sections~\ref{sec: Mixing properties},~\ref{decomposition_verbal}, and~\ref{nonab}, we can finally provide an algorithm that decides whether a  GCA over an arbitrary finite group is topologically transitive or not.

\vspace{0.3cm}
\begin{algorithm}[H]
\SetAlgoNlRelativeSize{0}
\SetAlgoNoLine
\SetKwFunction{test}{IsTransitive}
\SetKwProg{Fn}{Function}{:}{}
\Fn{\test{$\gr$,$\glorule$}}
{
\If{$(\gr,\glorule)$ is not surjective}
{\KwRet{False\;}}
$\{(\gr_1,\glorule_1),\dots,(\gr_k,\glorule_k)\}$  $\gets$ \Decomposition{$\gr$,\glorule}\;

\For{$i\gets 1$ \KwTo $k$}{
  \If{$(\gr_i^\zzd,\glorule_i)$ is not topologically transitive}{
    \KwRet{False\;}
  }
}
 \KwRet{True}\;
 }
\end{algorithm}
\vspace{0.3cm}

We now explain why the function \test is effectively computable and why its output determines whether $(\gr^\zzd,\glorule)$ is topologically transitive or not.

\medskip\noindent
\textbf{Computability of \test.}\\
Instruction~2 is computable: proved in \cite[Theorem 21]{BeaurK24}.\\
Instruction~5 is computable: since $\gr$ is a finite group, the computation of its verbal subgroups and the corresponding quotient operations are effective 
procedures.\\
Instruction~7 is computable: since each $(\gr_i^\zzd,\glorule_i)$ is a  GCA over a direct product of isomorphic simple groups, for which topological transitivity is decidable by Theorems~\ref{dec-zpn-trans} and~\ref{dec-zpn-nonabelian}.

\medskip\noindent
\textbf{Correctness of \test.}\\
The correctness of \test\ mainly follows from Theorem~\ref{th_transitivity}, which shows that $(\gr^\zzd,\glorule)$ is topologically transitive if and only if each $(\gr_i^\zzd,\glorule_i)$ is topologically transitive. 
Notice that, at Instruction~2, we perform a preliminary check on the surjectivity of $(\gr^\zzd,\glorule)$. 
This is required because the decidability of topological transitivity for GCAs defined over direct products of isomorphic non-abelian simple groups (Theorem~\ref{dec-zpn-nonabelian}) is established only for surjective GCAs. 
However, this restriction does not cause any loss of generality, since topological transitivity always implies surjectivity.

\medskip\noindent
Summarizing the results of this work, we can state the following theorem.

\begin{theorem}\label{supermain}
 Topological transitivity for  GCAs over a finite group $\gr$ is decidable. 
\end{theorem}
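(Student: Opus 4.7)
The plan is to package the results developed in Sections~\ref{decomposition_verbal},~\ref{abisom}, and~\ref{nonab} into a single decision procedure, following the structure of the function {\tt IsTransitive} presented above. Concretely, given a GCA $(\gr^\zzd,\glorule)$ over an arbitrary finite group $\gr$, I would first test surjectivity, then apply {\tt VerbalDecomposition} to reduce to GCAs over groups with no non-trivial proper verbal subgroups, and finally dispatch each resulting component to the appropriate specialized decidability result.

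For effectiveness I would argue step by step. The surjectivity test is decidable by \cite{BeaurK24}, and its output is already conclusive on one side: a non-surjective GCA cannot be topologically transitive, so we may return False in that case. Since $\gr$ is finite, one can enumerate its subgroups and effectively identify those that are verbal (by evaluating the relevant word maps) and hence fully invariant; the induced maps $\widetilde{\glorule}$ and $\overline{\glorule}$ can then be computed from $\glorule$ by their definitions. This makes the recursive call inside {\tt VerbalDecomposition} algorithmic, and termination is guaranteed because at each recursive step the cardinality of the underlying group strictly decreases (since the chosen $\grh$ is both non-trivial and proper). The recursion therefore bottoms out at groups admitting no non-trivial proper verbal subgroup, which by Remark~\ref{rem:verbaliso} are direct products of isomorphic simple groups. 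Each such leaf $(\gr_i^\zzd,\glorule_i)$ then falls under Theorem~\ref{dec-zpn-trans} if $\gr_i$ is abelian, and under Theorem~\ref{dec-zpn-nonabelian} otherwise; both give decidability.

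For correctness, the key input is Theorem~\ref{th_transitivity}, which yields the equivalence between topological transitivity of $(\gr^\zzd,\glorule)$ and simultaneous topological transitivity of all components $(\gr_i^\zzd,\glorule_i)$ produced by {\tt VerbalDecomposition}. The initial surjectivity filter is logically harmless, since topological transitivity implies surjectivity and therefore no transitive GCA is ever discarded by it; at the same time the filter is technically necessary, because Theorem~\ref{dec-zpn-nonabelian} is proved only for surjective GCAs over $\grs^n$ with $\grs$ a finite non-abelian simple group, so the non-abelian leaves may be handled only under the surjectivity assumption propagated from the input (the decomposition preserves surjectivity since both $\widetilde{\glorule}$ and $\overline{\glorule}$ inherit surjectivity from $\glorule$).

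The main obstacle is not in this final assembly but upstream: the delicate work is absorbed into Theorem~\ref{th_transitivity}, in particular statement~$(c)$, which requires expressing each element of a verbal subgroup as a bounded-length product of values of the defining word in order to transport transitivity from $\gr^\zzd$ to $\grh^\zzd$, and into the $d$-dimensional transitivity criteria underlying Theorems~\ref{dec-zpn-trans} and~\ref{dec-zpn-nonabelian}. Granting those ingredients, the proof of Theorem~\ref{supermain} reduces precisely to verifying that the algorithm {\tt IsTransitive} halts on every input and returns the correct Boolean value, which is exactly what the computability and correctness discussion immediately preceding the theorem establishes.
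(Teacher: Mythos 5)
Your proposal follows essentially the same route as the paper: a preliminary surjectivity test, the {\tt VerbalDecomposition} reduction justified by Theorem~\ref{th_transitivity}, and dispatch of the resulting components to Theorems~\ref{dec-zpn-trans} and~\ref{dec-zpn-nonabelian} according to whether the leaf group is abelian. The extra details you supply (termination of the recursion by cardinality, and the handling of surjectivity at the leaves) are consistent with, and slightly more explicit than, the paper's own correctness discussion.
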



\section{Conclusions and Further Work}\label{conclusions}
In this paper we prove that topological transitivity is decidable for  group cellular automata over arbitrary finite groups, thus resolving the open questions raised in \cite{BeaurK24,CASTRONUOVO2026103749}. 
Together with the equivalence, in the class of  GCAs, between topological transitivity and several stronger (or a priori different) dynamical properties, our main theorem yields the decidability of total transitivity, topological mixing, weak and strong ergodic mixing, and ergodicity.

The present work suggests a number of natural directions for further research.

\medskip\noindent
\textbf{Group subshifts.}
A first direction is to move beyond full shifts and consider \emph{group subshifts}, that is, closed shift-invariant subgroups $ \grx\subseteq \gr^{\mathbb{Z}^d}$, endowed with the induced prodiscrete topology. 
In this setting one can study continuous shift-commuting endomorphisms $\glorule: \grx\to  \grx$ (and, more generally, $\gr$-equivariant endomorphisms for suitable actions), which provide a natural analogue of GCAs on constrained configuration spaces.
A key problem is whether topological transitivity (and the equivalent mixing/ergodic properties) remains decidable when the phase space is a group subshift given by a finite description (e.g., of finite type, sofic, or specified by a finite set of forbidden patterns).
Identifying classes of group subshifts where a decomposition approach still applies, or where transitivity can be reduced to an algebraic condition on finitely presented modules, appears particularly promising.

\medskip\noindent
\textbf{Amenable-group index sets.}
A second direction concerns the extension from the lattice $\mathbb{Z}^d$ to more general indexing groups.
Given a countable amenable group $\grh$, one can consider the full shift $\gr^{\grh}$ with the (pro)discrete product topology and the shift action of $\grh$ by left translations.
It is then natural to define \emph{group cellular automata over $\grh$} as continuous $\grh$-equivariant group endomorphisms $\glorule:\gr^{\grh}\to \gr^{\grh}$, and to ask whether topological transitivity is decidable in this broader setting.
Here amenability provides a robust replacement for Følner geometry, which is often crucial in the study of mixing properties, entropy, and recurrence; nevertheless, the lack of a linear order and the potentially complex geometry of $\grh$ may require new tools.
    A concrete goal is to determine which parts of the decomposition technique extend from $\zzd$ to amenable groups, and to isolate subclasses of $\grh$ (e.g., virtually abelian, polycyclic, or more generally groups with an effective Følner sequence) for which an algorithmic characterization can be obtained.

\medskip\noindent
\textbf{Monoid-based configuration spaces.}
A further natural direction is to investigate whether the results obtained in this paper can be extended from groups to more general algebraic structures, in particular to monoids.
One may consider cellular automata whose configuration space is $M^{\mathbb{Z}^d}$, where $M$ is a finite monoid, and whose global evolution map is a continuous shift-commuting monoid endomorphism.
While many of the tools used in the group setting rely heavily on invertibility and on the availability of normal and verbal subgroups, it is natural to ask to what extent analogous decomposition techniques can be developed for monoids, possibly based on congruences or Green’s relations.
A central question is whether topological transitivity (and related mixing or ergodic properties) remains decidable in this broader setting, and which algebraic features of the underlying monoid play a decisive role.
Addressing these questions would contribute to a deeper understanding of the interplay between algebraic structure and dynamical complexity in cellular automata.

\medskip\noindent
\textbf{Complexity and effective invariants.}
Finally, beyond decidability it is natural to investigate the computational complexity of the decision procedures arising from our approach, and to develop effective invariants (e.g., module-theoretic or representation-theoretic data) that control transitivity and mixing for GCAs.
This may lead to sharper classifications and to algorithms that are practical for concrete instances.

\medskip\noindent
We believe that these directions provide a fruitful framework for extending the algorithmic theory of dynamical properties from full group shifts over $\mathbb{Z}^d$ to more general algebraic and geometric settings.

\bibliographystyle{plain}
\bibliography{bib}

\end{document}